\newtheorem{theorem}{Theorem}
\newtheorem{acknowledgement}[theorem]{Acknowledgement}
\newtheorem{definition}[theorem]{Definition}
\newtheorem{remark}[theorem]{Remark}
\newenvironment{proof}[1][Proof]{\noindent\textbf{#1.} }{\ \rule{0.5em}{0.5em}}
\begin{document}

\title{A General Variational Principle of Classical Field and Its
Application to General relativity I}
\author{Zhaoyan Wu \\
Center for Theoretical Physics, Jilin University}
\maketitle

\begin{abstract}
A general variational principle of classical fields with a Lagrangian
containing field quantity and its derivatives of up to the N-order is
presented. Noether's theorem is derived. The generalized Hamilton-Jacobi's
equation for the Hamilton's principal functional is obtained. These results
are surprisingly in great harmony with each other. They will be applied to
general relativity in the subsequent articles, especially the generalized
Noether's theorem will be applied to the problem of conservation and
non-conservation in curved spacetime.
\end{abstract}

\section{Introduction}

The aim of this series of articles is to explore the conservation and
non-conservation in curved spacetime, especially to explore the difficulty
of energy-momentum conservation in general relativity and the gravitational
energy-momentum. We start with presenting a general variational principle
for classical fields with a Lagrangian containing the field quantity and its
derivatives of up to the $N$-th order (part I), Then the general results
from part I are applied to general relativity, especially the generalized
Noether's theorem is applied to the problem of conservation and
non-conservation in general relativity (part II). The last part (part III)
is devoted to the difficulty of conservation of energy-momentum in curved
spacetime and to the problem whether the metric field carries
energy-momentum or not.

The\ developments of modern physics, such as the founding of statistical
mechanics and quantum mechanics, have proved that the variational principle
approach to dynamics is not only an alternative and equivalent version to
the naive, intuitive approach, but also yields deeper insights into the
underlying physics. For instance, it is hard to imagine that the statistical
mechanics could have been established\ without using the concepts of phase
space, and the quantum mechanics\ could have been established without using
the concept of Hamiltonian. Therefore, we will found our argument on a
general variational principle for classical field. It might be for the same
reason, soon after Einstein proposed his general theory of relativity,
Hilbert made the first attempt to get Einstein's equation by using the least
action principle. The Lagrangian being used for vacuum Einstein's equation, $%
(16\pi G)^{-1}R$, is the only independent scalar constructed in terms of the
metric field and its derivatives of no higher than the second order.
However, because the Ricci scalar curvature $R$ contains the second order
derivatives of the metric field $g_{\mu \nu }(x)$, which is now the dynamic
variable, the least action priciple for Lagrangians containing only the
field quantity and its first order derivatives\ does not lead to Einstein's
field equation. The generally accepted solution to this difficulty is adding
the Gibbons-Hawking boundary term to the Hilbert action and keeping the
least action principle unchanged[1]. But there is another solution to this
difficulty, which is adopted in the present paper. The least action
principle will be restated and the Hilbert action will still be used for the
vacuum Einstein's equation. In order to show this is proper and natural, we
will consider classical fields with a Lagrangian containing the field
quantity and its derivatives of up to the $N$-th order. In our opinion,
acting at a distance is not acceptable, so we assume that the Lagrangian
does not contain the integral of the field quantity. In section 2, a general
Lagrangian formalism for classical fields is presented. In section 3, the
Hamiltonian formalism is discussed. In section 4, the Noether's theorem is
derived and\ the conservation law due to the "coordinate shift" invariance
is established. The generalized Hamilton-Jacobi's equation is obtained in
section 5. All the results obtained above are in great harmony with each
other and apply to various classical fields, say, those with Galilean
covariance, with Lorentzian covariance, with general covariance, or without
such covariance. Part I finishes with a remark. In part II, this general
variational\ principle of classical fields developed in part I is applied to
general relativity and\ quite a few conserved quantities corresponding to
the coordinate "shift" invariance, coordinate "rotation" invariance etc. are
found. And the properties of these conservation laws are discussed. In part
III, after some general consideration, the introducing of gravitational
energy-momentum is reviewed. The difficulties of conservation of
energy-momentum in general relativity are explored by using Noether's
theorem and observations from geometry. It is pointed out that the metric
field does not carry energy-momentum, and the law of conservation of
energy-momentum no longer holds in curved spacetime.

\section{Lagrangian formulation of classical fields}

Suppose that our spacetime $M$ is a smooth manifold which is differentially
homeomorphic to $\mathbb{R}^{4}$. Choose a chart $(M,\varphi )$, and denote
by $(x^{0},x^{1},x^{2},x^{3})$ the corresponding coordinates. Suppose the
action over any spacetime region $\Omega \subset M$ of the classical field $%
\{\Phi _{a}(x)\}$ is 
\begin{equation}
A=\int_{\Omega }d^{4}xL(x,\Phi (x),\partial \Phi (x),\partial ^{2}\Phi
(x),\ldots ,\partial ^{N}\Phi (x))=A[\Phi ]  \tag{1}
\end{equation}%
where the Lagrangian $L$ is a function of the spacetime coordinate, the
field $\{\Phi _{a}(x)\}$ and its derivatives of no higher than the $N$-th
order. $L$ does not contain the integral of $\{\Phi _{a}(x)\}$, since acting
at a distance\ is not acceptable. In order to develop a general variational
principle for all locally interacting classical fields, here we suppose that 
$L$ can manifestly contain the spacetime coordinates, and the Galilean
invariance,\ Lorentzian invariance or the general invariance are not assumed
for the time being. For the sake of simplicity, we have assumed that our
spacetime\ manifold is $(3+1)$-dimensional. However, our presentation has
nothing to do with the spacetime dimensionality. It still holds for an $%
(n+1) $-dimensional spacetime.

Consider the difference between actions over $\Omega$ of two possible
movements close to each other. Using integration by parts and Stokes
theorem, one gets

\begin{align}
\delta A[\Phi] & =\int_{\Omega}d^{4}x\delta\Phi_{a}(x)[\frac{\partial L}{%
\partial\Phi_{a}(x)}-\partial_{\lambda_{1}}\frac{\partial L}{\partial
\partial_{\lambda_{1}}\Phi_{a}(x)}+-\cdots  \notag \\
& +(-1)^{N}\partial_{\lambda_{1}}\cdots\partial_{\lambda_{N}}\frac{\partial L%
}{\partial\partial_{\lambda_{1}}\cdots\partial_{\lambda_{N}}\Phi_{a}(x)}%
]+\int_{\partial\Omega}ds_{\lambda}[B^{a\lambda}\delta\Phi_{a}(x)+B^{a%
\lambda\nu_{1}}\delta\partial_{\nu_{1}}\Phi_{a}(x)  \notag \\
&
+B^{a\lambda\nu_{1}\nu_{2}}\delta\partial_{\nu_{1}}\partial_{\nu_{2}}%
\Phi_{a}(x)+\cdots+B^{a\lambda\nu_{1}\cdots\nu_{N-1}}\delta\partial_{%
\nu_{1}}\cdots\partial_{\nu_{N-1}}\Phi_{a}(x)],  \tag{2}
\end{align}
where the Greek indices go through $0,1,2,3,$ and

\begin{align}
B^{a\lambda} & =\frac{\partial L}{\partial\partial_{\lambda}\Phi_{a}(x)}%
-\partial_{\mu_{1}}\frac{\partial L}{\partial\partial_{\lambda}\partial_{%
\mu_{1}}\Phi_{a}(x)}+\partial_{\mu_{1}}\partial_{\mu_{2}}\frac{\partial L}{%
\partial\partial_{\lambda}\partial_{\mu_{1}}\partial _{\mu_{2}}\Phi_{a}(x)}%
-+\cdots  \notag \\
& +(-1)^{N-1}\partial_{\mu_{1}}\cdots\partial_{\mu_{N-1}}\frac{\partial L}{%
\partial\partial_{\lambda}\partial_{\mu_{1}}\cdots\partial_{\mu_{N-1}}%
\Phi_{a}(x)},  \notag \\
B^{a\lambda\nu_{1}} & =\frac{\partial L}{\partial\partial_{\lambda}%
\partial_{\nu_{1}}\Phi_{a}(x)}-\partial_{\mu_{1}}\frac{\partial L}{%
\partial\partial_{\lambda}\partial_{\mu_{1}}\partial_{\nu_{1}}\Phi_{a}(x)}%
+\partial_{\mu_{1}}\partial_{\mu_{2}}\frac{\partial L}{\partial
\partial_{\lambda}\partial_{\mu_{1}}\partial_{\mu_{2}}\partial_{\nu_{1}}%
\Phi_{a}(x)}-+\cdots  \notag \\
& +(-1)^{N-2}\partial_{\mu_{1}}\cdots\partial_{\mu_{N-2}}\frac{\partial L}{%
\partial\partial_{\lambda}\partial_{\mu_{1}}\cdots\partial_{\mu_{N-2}}%
\partial_{\nu_{1}}\Phi_{a}(x)},\ldots,  \notag \\
B^{a\lambda\nu_{1}\cdots\nu_{N-2}} & =\frac{\partial L}{\partial
\partial_{\lambda}\partial_{\nu_{1}}\cdots\partial_{\nu_{N-2}}\Phi_{a}(x)}%
-\partial_{\mu_{1}}\frac{\partial L}{\partial\partial_{\lambda}\partial_{%
\mu_{1}}\partial_{\nu_{1}}\cdots\partial_{\nu_{N-2}}\Phi_{a}(x)},  \notag \\
B^{a\lambda\nu_{1}\cdots\nu_{N-1}} & =\frac{\partial L}{\partial
\partial_{\lambda}\partial_{\nu_{1}}\cdots\partial_{\nu_{N-1}}\Phi_{a}(x)}. 
\tag{3}
\end{align}

Equation\ (2) suggests \textbf{the least action principle} reads as follows.

For \textit{any spacetime region }$\Omega$, a\textit{mong all possible
movements in }$\Omega$ \textit{with the same boundary condition}

\begin{equation}
\delta\Phi|_{\partial\Omega}=0,\delta\partial\Phi|_{\partial\Omega}=0,%
\ldots,\delta\partial^{N-1}\Phi|_{\partial\Omega}=0,  \tag{4}
\end{equation}

\textit{the real movement corresponds to the stationary value of the action
over }$\Omega$\textit{.}

Combining eqns.(2), (4), one obtains\textbf{\ the field equation (
Euler-Lagrange equation ) }satisfied\textbf{\ }by the real movement

\begin{equation}
\frac{\delta A}{\delta\Phi_{a}(x)}=\frac{\partial L}{\partial\Phi_{a}(x)}%
-\partial_{\lambda_{1}}\frac{\partial L}{\partial\partial_{\lambda_{1}}%
\Phi_{a}(x)}+-\cdots+(-1)^{N}\partial_{\lambda_{1}}\cdots\partial
_{\lambda_{N}}\frac{\partial L}{\partial\partial_{\lambda_{1}}\cdots
\partial_{\lambda_{N}}\Phi_{a}(x)}=0.  \tag{5}
\end{equation}

\section{Hamiltonian formulation of classical fields}

To formulate the Hamiltonian formalism of classic fields, one needs to
specify a reference coordinate system $(\xi^{0},\xi^{1},\xi^{2},\xi^{3})$,
such that the hyper-surfaces, $\Sigma_{\xi^{0}},$of constant $\xi^{0}$ are
spacelike Cauchy hyper-surfaces and the curves, $t_{\overrightarrow{\xi}}$
,of constant $\overrightarrow{\xi}$ are timelike world lines of the observer
at $\overrightarrow{\xi}$. We will consider the state of the field on
hyper-surfaces, $\Sigma_{\xi^{0}}$, and investigate the change of state
(evolution) with $\xi^{0}$. We will observe the state on $\Sigma_{\xi^{0}}$
and the evolution with $\xi^{0}$ from any reference coordinate system in the
same way. As has been pointed out[2], the properly formulated Hamiltonian
formalism is compatible with all dynamic systems, Galilean invariant,
Lorentzian invariant, general invariant and so on. The invariance is the
heritage from the Lagrangian being used. The $3+1$ decomposition of
spacetime proposed above is more general than the one generally accepted in
General relativity. The latter relies on the unknown dynamical variable, the
metric field. Suppose the Lagrangian functional is

\begin{equation}
\Lambda=\int d^{3}\xi L(\xi,\Phi(\xi),\partial\Phi(\xi),\ldots,\partial
^{N}\Phi(\xi))=\Lambda\lbrack\xi^{0},\Phi|_{\xi^{0}},\partial_{0}\Phi
|_{\xi^{0}},\ldots,\partial_{0}^{N}\Phi|_{\xi^{0}}],  \tag{6}
\end{equation}
Consider the difference between Lagrangian functionals of two states close
to each other. Using the general formula (2), which is independent of the
dimensionality, one gets

\begin{align*}
\delta \Lambda & =\int d^{3}\xi \{[\frac{\partial L}{\partial \Phi _{a}(\xi )%
}-\partial _{i_{1}}\frac{\partial L}{\partial \partial _{i_{1}}\Phi _{a}(\xi
)}+-\cdots  \\
& +(-1)^{N}\partial _{i_{1}}\cdots \partial _{i_{N}}\frac{\partial L}{%
\partial \partial _{i_{1}}\cdots \partial _{i_{N}}\Phi _{a}(\xi )}]\delta
\Phi _{a}(\xi ) \\
& +[\frac{\partial L}{\partial \partial _{0}\Phi _{a}(\xi )}%
-C_{2}^{1}\partial _{i_{1}}\frac{\partial L}{\partial \partial
_{i_{1}}\partial _{0}\Phi _{a}(\xi )}+-\cdots  \\
& +(-1)^{N-1}C_{N}^{1}\partial _{i_{1}}\cdots \partial _{i_{N-1}}\frac{%
\partial L}{\partial \partial _{i_{1}}\cdots \partial _{i_{N-1}}\partial
_{0}\Phi _{a}(\xi )}]\delta \partial _{0}\Phi _{a}(\xi ) \\
& +[\frac{\partial L}{\partial \partial _{0}^{2}\Phi _{a}(\xi )}%
-C_{3}^{2}\partial _{i_{1}}\frac{\partial L}{\partial \partial
_{i_{1}}\partial _{0}^{2}\Phi _{a}(\xi )}+-\cdots  \\
& +(-1)^{N-2}C_{N}^{2}\partial _{i_{1}}\cdots \partial _{i_{N-2}}\frac{%
\partial L}{\partial \partial _{i_{1}}\cdots \partial _{i_{N-2}}\partial
_{0}^{2}\Phi _{a}(\xi )}]\delta \partial _{0}^{2}\Phi _{a}(\xi )
\end{align*}%
\begin{eqnarray*}
&&+\cdots +[\frac{\partial L}{\partial \partial _{0}^{N-1}\Phi _{a}(\xi )}%
-C_{N}^{N-1}\partial _{i_{1}}\frac{\partial L}{\partial \partial
_{i_{1}}\partial _{0}^{N-1}\Phi _{a}(\xi )}]\delta \partial _{0}^{N-1}\Phi
_{a}(\xi ) \\
&&+\frac{\partial L}{\partial \partial _{0}^{N}\Phi _{a}(\xi )}\delta
\partial _{0}^{N}\Phi _{a}(\xi )\}+
\end{eqnarray*}%
\begin{align}
& +\int d^{3}\xi \partial _{i}\{[K^{ai}\delta \Phi _{a}(\xi
)+K^{aik_{1}}\delta \partial _{k_{1}}\Phi _{a}(\xi )+K^{aik_{1}k_{2}}\delta
\partial _{k_{1}}\partial _{k_{2}}\Phi _{a}(\xi )  \notag \\
& +\cdots +K^{aik_{1}\cdots k_{N-1}}\delta \partial _{k_{1}}\cdots \partial
_{k_{N-1}}\Phi _{a}(\xi )]  \notag \\
& +[K_{1}^{ai}\delta \partial _{0}\Phi _{a}(\xi )+K_{1}^{aik_{1}}\delta
\partial _{k_{1}}\partial _{0}\Phi _{a}(\xi )+K_{1}^{aik_{1}k_{2}}\delta
\partial _{k_{1}}\partial _{k_{2}}\partial _{0}\Phi _{a}(\xi )  \notag \\
& +\cdots +K_{1}^{aik_{1}\cdots k_{N-2}}\delta \partial _{k_{1}}\cdots
\partial _{k_{N-2}}\partial _{0}\Phi _{a}(\xi )]  \notag \\
& +[K_{2}^{ai}\delta \partial _{0}^{2}\Phi _{a}(\xi )+K_{2}^{aik_{1}}\delta
\partial _{k_{1}}\partial _{0}^{2}\Phi _{a}(\xi )+K_{2}^{aik_{1}k_{2}}\delta
\partial _{k_{1}}\partial _{k_{2}}\partial _{0}^{2}\Phi _{a}(\xi )  \notag \\
& +\cdots +K_{2}^{aik_{1}\cdots k_{N-3}}\delta \partial _{k_{1}}\cdots
\partial _{k_{N-3}}\partial _{0}^{2}\Phi _{a}(\xi )]  \notag \\
& +\cdots +[K_{N-2}^{ai}\delta \partial _{0}^{N-2}\Phi _{a}(\xi
)+K_{N-2}^{aik_{1}}\delta \partial _{k_{1}}\partial _{0}^{N-2}\Phi _{a}(\xi
)]+K_{N-1}^{ai}\delta \partial _{0}^{N-1}\Phi _{a}(\xi )\},  \tag{7}
\end{align}%
where the domain of integration is $\mathbb{R}^{3}$, the latin indices go
through $1,2,3,$and

\begin{align*}
K^{ai} & =\frac{\partial L}{\partial\partial_{i}\Phi_{a}(\xi)}%
-\partial_{j_{1}}\frac{\partial L}{\partial\partial_{i}\partial_{j_{1}}%
\Phi_{a}(\xi)}+\partial_{j_{1}}\partial_{j_{2}}\frac{\partial L}{%
\partial\partial_{i}\partial_{j_{1}}\partial_{j_{2}}\Phi_{a}(\xi)}-+\cdots \\
& +(-1)^{N-1}\partial_{j_{1}}\cdots\partial_{j_{N-1}}\frac{\partial L}{%
\partial\partial_{i}\partial_{j_{1}}\cdots\partial_{j_{N-1}}\Phi_{a}(\xi)},
\end{align*}%
\begin{align*}
K^{aik_{1}} & =\frac{\partial L}{\partial\partial_{i}\partial_{k_{1}}%
\Phi_{a}(\xi)}-\partial_{j_{1}}\frac{\partial L}{\partial\partial_{i}%
\partial_{j_{1}}\partial_{k_{1}}\Phi_{a}(\xi)}+\partial_{j_{1}}\partial
_{j_{2}}\frac{\partial L}{\partial\partial_{i}\partial_{j_{1}}%
\partial_{j_{2}}\partial_{k_{1}}\Phi_{a}(\xi)}-+\cdots \\
& +(-1)^{N-2}\partial_{j_{1}}\cdots\partial_{j_{N-2}}\frac{\partial L}{%
\partial\partial_{i}\partial_{j_{1}}\cdots\partial_{j_{N-2}}\partial
_{k_{1}}\Phi_{a}(\xi)},\ldots,
\end{align*}

\begin{align*}
K^{aik_{1}\cdots k_{N-2}} & =\frac{\partial L}{\partial\partial_{i}%
\partial_{k_{1}}\cdots\partial_{k_{N-2}}\Phi_{a}(\xi)}-\partial_{j_{1}}\frac{%
\partial L}{\partial\partial_{i}\partial_{j_{1}}\partial_{k_{1}}\cdots%
\partial_{k_{N-2}}\Phi_{a}(\xi)}, \\
K^{aik_{1}\cdots k_{N-1}} & =\frac{\partial L}{\partial\partial_{i}%
\partial_{k_{1}}\cdots\partial_{k_{N-1}}\Phi_{a}(\xi)},
\end{align*}%
\begin{align*}
K_{1}^{ai} & =c_{2}^{1}\frac{\partial L}{\partial\partial_{i}\partial
_{0}\Phi_{a}(\xi)}-c_{3}^{1}\partial_{j_{1}}\frac{\partial L}{\partial
\partial_{i}\partial_{j_{1}}\partial_{0}\Phi_{a}(\xi)}+c_{4}^{1}%
\partial_{j_{1}}\partial_{j_{2}}\frac{\partial L}{\partial\partial_{i}%
\partial_{j_{1}}\partial_{j_{2}}\partial_{0}\Phi_{a}(\xi)} \\
& -+\cdots+(-1)^{N-2}c_{N}^{1}\partial_{j_{1}}\cdots\partial_{j_{N-2}}\frac{%
\partial L}{\partial\partial_{i}\partial_{j_{1}}\cdots\partial_{j_{N-2}}%
\partial_{0}\Phi_{a}(\xi)},
\end{align*}%
\begin{align*}
K_{1}^{aik_{1}} & =c_{3}^{1}\frac{\partial L}{\partial\partial_{i}%
\partial_{k_{1}}\partial_{0}\Phi_{a}(\xi)}-c_{4}^{1}\partial_{j_{1}}\frac{%
\partial L}{\partial\partial_{i}\partial_{j_{1}}\partial_{k_{1}}\partial_{0}%
\Phi_{a}(\xi)}+c_{5}^{1}\partial_{j_{1}}\partial_{j_{2}}\frac{\partial L}{%
\partial\partial_{i}\partial_{j_{1}}\partial_{j_{2}}\partial_{k_{1}}%
\partial_{0}\Phi_{a}(\xi)}-+\cdots \\
& +(-1)^{N-3}c_{N}^{1}\partial_{j_{1}}\cdots\partial_{j_{N-3}}\frac{\partial
L}{\partial\partial_{i}\partial_{j_{1}}\cdots\partial_{j_{N-3}}\partial
_{k_{1}}\partial_{0}\Phi_{a}(\xi)},\ldots,
\end{align*}%
\begin{align*}
K_{1}^{aik_{1}\cdots k_{N-3}} & =c_{N-1}^{1}\frac{\partial L}{\partial
\partial_{i}\partial_{k_{1}}\cdots\partial_{k_{N-3}}\partial_{0}\Phi_{a}(\xi
)}-c_{N}^{1}\partial_{j_{1}}\frac{\partial L}{\partial\partial_{i}%
\partial_{j_{1}}\partial_{k_{1}}\cdots\partial_{k_{N-3}}\partial_{0}\Phi
_{a}(\xi)}, \\
K_{1}^{aik_{1}\cdots k_{N-2}} & =c_{N}^{1}\frac{\partial L}{\partial
\partial_{i}\partial_{k_{1}}\cdots\partial_{k_{N-2}}\partial_{0}\Phi_{a}(\xi
)},
\end{align*}%
\begin{align*}
K_{2}^{ai} & =c_{3}^{2}\frac{\partial L}{\partial\partial_{i}\partial
_{0}^{2}\Phi_{a}(\xi)}-c_{4}^{2}\partial_{j_{1}}\frac{\partial L}{%
\partial\partial_{i}\partial_{j_{1}}\partial_{0}^{2}\Phi_{a}(\xi)}%
+c_{5}^{2}\partial_{j_{1}}\partial_{j_{2}}\frac{\partial L}{\partial\partial
_{i}\partial_{j_{1}}\partial_{j_{2}}\partial_{0}^{2}\Phi_{a}(\xi)}-+\cdots \\
& +(-1)^{N-3}c_{N}^{2}\partial_{j_{1}}\cdots\partial_{j_{N-3}}\frac{\partial
L}{\partial\partial_{i}\partial_{j_{1}}\cdots\partial_{j_{N-3}}\partial
_{0}^{2}\Phi_{a}(\xi)},
\end{align*}%
\begin{align*}
K_{2}^{aik_{1}} & =c_{4}^{2}\frac{\partial L}{\partial\partial_{i}%
\partial_{k_{1}}\partial_{0}^{2}\Phi_{a}(\xi)}-c_{5}^{2}\partial_{j_{1}}%
\frac{\partial L}{\partial\partial_{i}\partial_{j_{1}}\partial_{k_{1}}%
\partial_{0}^{2}\Phi_{a}(\xi)}+c_{6}^{2}\partial_{j_{1}}\partial_{j_{2}}%
\frac{\partial L}{\partial\partial_{i}\partial_{j_{1}}\partial_{j_{2}}%
\partial_{k_{1}}\partial_{0}^{2}\Phi_{a}(\xi)}-+\cdots \\
& +(-1)^{N-4}c_{N}^{2}\partial_{j_{1}}\cdots\partial_{j_{N-4}}\frac{\partial
L}{\partial\partial_{i}\partial_{j_{1}}\cdots\partial_{j_{N-4}}\partial
_{k_{1}}\partial_{0}^{2}\Phi_{a}(\xi)},\ldots,
\end{align*}%
\begin{align*}
K_{2}^{aik_{1}\cdots k_{N-4}} & =c_{N-1}^{2}\frac{\partial L}{\partial
\partial_{i}\partial_{k_{1}}\cdots\partial_{k_{N-4}}\partial_{0}^{2}\Phi
_{a}(\xi)}-c_{N}^{2}\partial_{j_{1}}\frac{\partial L}{\partial\partial
_{i}\partial_{j_{1}}\partial_{k_{1}}\cdots\partial_{k_{N-4}}\partial_{0}^{2}%
\Phi_{a}(\xi)}, \\
K_{2}^{aik_{1}\cdots k_{N-3}} & =c_{N}^{2}\frac{\partial L}{\partial
\partial_{i}\partial_{k_{1}}\cdots\partial_{k_{N-3}}\partial_{0}^{2}\Phi
_{a}(\xi)},\ldots,
\end{align*}

\begin{align}
K_{N-2}^{ai} & =c_{N-1}^{N-2}\frac{\partial L}{\partial\partial_{i}%
\partial_{0}^{N-2}\Phi_{a}(\xi)}-c_{N}^{N-2}\partial_{j_{1}}\frac{\partial L%
}{\partial\partial_{i}\partial_{j_{1}}\partial_{0}^{N-2}\Phi_{a}(\xi )}, 
\notag \\
K_{N-2}^{aik_{1}} & =c_{N}^{N-2}\frac{\partial L}{\partial\partial
_{i}\partial_{k_{1}}\partial_{0}^{N-2}\Phi_{a}(\xi)},K_{N-1}^{ai}=c_{N}^{N-1}%
\frac{\partial L}{\partial\partial_{i}\partial_{0}^{N-1}\Phi_{a}(\xi)}. 
\tag{8}
\end{align}
Hence%
\begin{align*}
\frac{\delta\Lambda}{\delta\Phi_{a}(\xi)} & =\frac{\partial L}{\partial
\Phi_{a}(\xi)}-\partial_{i_{1}}\frac{\partial L}{\partial\partial_{i_{1}}%
\Phi_{a}(\xi)} \\
& +-\cdots+(-1)^{N}\partial_{i_{1}}\cdots\partial_{i_{N}}\frac{\partial L}{%
\partial\partial_{i_{1}}\cdots\partial_{i_{N}}\Phi_{a}(\xi)}, \\
\frac{\delta\Lambda}{\delta\partial_{0}\Phi_{a}(\xi)} & =\frac{\partial L}{%
\partial\partial_{0}\Phi_{a}(\xi)}-C_{2}^{1}\partial_{i_{1}}\frac{\partial L%
}{\partial\partial_{i_{1}}\partial_{0}\Phi_{a}(\xi)} \\
& +-\cdots+(-1)^{N-1}C_{N}^{1}\partial_{i_{1}}\cdots\partial_{i_{N-1}}\frac{%
\partial L}{\partial\partial_{i_{1}}\cdots\partial_{i_{N-1}}\partial
_{0}\Phi_{a}(\xi)}, \\
\frac{\delta\Lambda}{\delta\partial_{0}^{2}\Phi_{a}(\xi)} & =\frac{\partial L%
}{\partial\partial_{0}^{2}\Phi_{a}(\xi)}-C_{3}^{2}\partial_{i_{1}}\frac{%
\partial L}{\partial\partial_{i_{1}}\partial_{0}^{2}\Phi_{a}(\xi)} \\
& +-\cdots+(-1)^{N-2}C_{N}^{2}\partial_{i_{1}}\cdots\partial_{i_{N-2}}\frac{%
\partial L}{\partial\partial_{i_{1}}\cdots\partial_{i_{N-2}}\partial
_{0}^{2}\Phi_{a}(\xi)}, \\
& \ldots
\end{align*}

\begin{align}
\frac{\delta\Lambda}{\delta\partial_{0}^{N-1}\Phi_{a}(\xi)} & =\frac {%
\partial L}{\partial\partial_{0}^{N-1}\Phi_{a}(\xi)}-C_{N}^{N-1}%
\partial_{i_{1}}\frac{\partial L}{\partial\partial_{i_{1}}\partial_{0}^{N-1}%
\Phi_{a}(\xi)},  \notag \\
\frac{\delta\Lambda}{\delta\partial_{0}^{N}\Phi_{a}(\xi)} & =\frac{\partial L%
}{\partial\partial_{0}^{N}\Phi_{a}(\xi)}.  \tag{9}
\end{align}

Consider the difference between actions of two possible movements close to
each other.

\begin{align*}
\delta A &
=\int_{t_{0}}^{t}d\xi^{0}\Lambda\lbrack\Phi|_{\xi^{0}},\partial_{0}\Phi|_{%
\xi^{0}},\ldots,\partial_{0}^{N}\Phi|_{\xi^{0}}]=\int_{t_{0}}^{t}d\xi^{0}%
\int d^{3}\xi L(\Phi(\xi),\partial\Phi(\xi ),\ldots,\partial^{N}\Phi(\xi)) \\
& =\int_{t_{0}}^{t}d\xi^{0}\int d^{3}\xi\lbrack\frac{\delta\Lambda}{%
\delta\Phi_{a}(\xi)}\delta\Phi_{a}(\xi)+\frac{\delta\Lambda}{\delta
\partial_{0}\Phi_{a}(\xi)}\delta\partial_{0}\Phi_{a}(\xi)+\frac{%
\delta\Lambda }{\delta\partial_{0}^{2}\Phi_{a}(\xi)}\delta\partial_{0}^{2}%
\Phi_{a}(\xi) \\
& +\cdots+\frac{\delta\Lambda}{\delta\partial_{0}^{N}\Phi_{a}(\xi)}%
\delta\partial_{0}^{N}\Phi_{a}(\xi)]+\int_{t_{0}}^{t}d\xi^{0}\int
d^{3}\xi\partial_{i}\{[K^{ai}\delta\Phi_{a}(\xi)+K^{aik_{1}}\delta%
\partial_{k_{1}}\Phi_{a}(\xi) \\
& +K^{aik_{1}k_{2}}\delta\partial_{k_{1}}\partial_{k_{2}}\Phi_{a}(\xi
)+\cdots+K^{aik_{1}\cdots k_{N-1}}\delta\partial_{k_{1}}\cdots\partial
_{k_{N-1}}\Phi_{a}(\xi)+K^{aik_{1}\cdots
k_{N}}\delta\partial_{k_{1}}\cdots\partial_{k_{N}}\Phi_{a}(\xi)] \\
& +[K_{1}^{ai}\delta\partial_{0}\Phi_{a}(\xi)+K_{1}^{aik_{1}}\delta
\partial_{k_{1}}\partial_{0}\Phi_{a}(\xi)+K_{1}^{aik_{1}k_{2}}\delta
\partial_{k_{1}}\partial_{k_{2}}\partial_{0}\Phi_{a}(\xi)+\cdots \\
& +K_{1}^{aik_{1}\cdots k_{N-2}}\delta\partial_{k_{1}}\cdots\partial
_{k_{N-2}}\partial_{0}\Phi_{a}(\xi)]+[K_{2}^{ai}\delta\partial_{0}^{2}\Phi
_{a}(\xi)+K_{2}^{aik_{1}}\delta\partial_{k_{1}}\partial_{0}^{2}\Phi_{a}(\xi)
\\
&
+K_{2}^{aik_{1}k_{2}}\delta\partial_{k_{1}}\partial_{k_{2}}\partial_{0}^{2}%
\Phi_{a}(\xi)+\cdots+K_{2}^{aik_{1}\cdots
k_{N-3}}\delta\partial_{k_{1}}\cdots\partial_{k_{N-3}}\partial_{0}^{2}%
\Phi_{a}(\xi)]+\cdots \\
&
+[K_{N-2}^{ai}\delta\partial_{0}^{N-2}\Phi_{a}(\xi)+K_{N-2}^{aik_{1}}\delta%
\partial_{k_{1}}\partial_{0}^{N-2}\Phi_{a}(\xi)]+K_{N-1}^{ai}\delta%
\partial_{0}^{N-1}\Phi_{a}(\xi)\}
\end{align*}%
\begin{align*}
& =\int_{t_{0}}^{t}d\xi^{0}\int d^{3}\xi\lbrack\frac{\delta\Lambda}{%
\delta\Phi_{a}(\xi)}-\partial_{0}\frac{\delta\Lambda}{\delta\partial_{0}%
\Phi_{a}(\xi)}+\partial_{0}^{2}\frac{\delta\Lambda}{\delta\partial_{0}^{2}%
\Phi_{a}(\xi)}+-\cdots \\
& +(-1)^{N}\partial_{0}^{N}\frac{\delta\Lambda}{\delta\partial_{0}^{N}%
\Phi_{a}(\xi)}]\delta\Phi_{a}(\xi)+\int_{t_{0}}^{t}d\xi^{0}\int
d^{3}\xi\partial_{\lambda}[B^{a\lambda}\delta\Phi_{a}(\xi)+B^{a\lambda%
\nu_{1}}\delta\partial_{\nu_{1}}\Phi_{a}(\xi)+
\end{align*}%
\begin{align}
&
+B^{a\lambda\nu_{1}\nu_{2}}\delta\partial_{\nu_{1}}\partial_{\nu_{2}}%
\Phi_{a}(\xi)+\cdots+B^{a\lambda\nu_{1}\cdots\nu_{N-1}}\delta\partial_{\nu
_{1}}\cdots\partial_{\nu_{N-1}}\Phi_{a}(\xi)  \notag \\
& +B^{a\lambda\nu_{1}\cdots\nu_{N}}\delta\partial_{\nu_{1}}\cdots
\partial_{\nu_{N}}\Phi_{a}(\xi)]  \tag{10}
\end{align}
Using the least action principle, one re-obtains\textbf{\ }the\textbf{\
Euler-Lagrange equation }

\begin{equation}
\frac{\delta\Lambda}{\delta\Phi_{a}(\xi)}-\partial_{0}\frac{\delta\Lambda }{%
\delta\partial_{0}\Phi_{a}(\xi)}+\partial_{0}^{2}\frac{\delta\Lambda}{%
\delta\partial_{0}^{2}\Phi_{a}(\xi)}+-\cdots+(-1)^{N}\partial_{0}^{N}\frac{%
\delta\Lambda}{\delta\partial_{0}^{N}\Phi_{a}(\xi)}=0  \tag{11}
\end{equation}
Noting eqn.(9), one easily sees that eqns.(11) and (5) are exactly the same.

Let 
\begin{align}
\frac{\delta\Lambda}{\delta\partial_{0}\Phi_{a}(\xi)} & =\pi_{1}^{a}(\xi),%
\frac{\delta\Lambda}{\delta\partial_{0}^{2}\Phi_{a}(\xi)}=\pi_{2}^{a}(\xi),%
\ldots,\frac{\delta\Lambda}{\delta\partial_{0}^{N}\Phi_{a}(\xi)}%
=\pi_{N}^{a}(\xi),  \notag \\
H & =\int d^{3}\xi\lbrack\pi_{1}^{a}(\xi)\partial_{0}\Phi_{a}(\xi)+\pi
_{2}^{a}(\xi)\partial_{0}^{2}\Phi_{a}(\xi)+\cdots+\pi_{N}^{a}(\xi)\partial
_{0}^{N}\Phi_{a}(\xi)]-\Lambda  \tag{12}
\end{align}

One easily gets

\begin{equation*}
\delta H=\int
d^{3}\xi\lbrack\partial_{0}\Phi_{a}(\xi)\delta\pi_{1}^{a}(\xi)+%
\partial_{0}^{2}\Phi_{a}(\xi)\delta\pi_{2}^{a}(\xi)+\cdots+\partial
_{0}^{N}\Phi_{a}(\xi)\delta\pi_{N}^{a}(\xi)-\frac{\delta\Lambda}{\delta
\Phi_{a}(\xi)}\delta\Phi_{a}(\xi)]
\end{equation*}%
\begin{align}
& -\int d\sigma_{i}\{[K^{ai}\delta\Phi_{a}(\xi)+K^{aik_{1}}\delta
\partial_{k_{1}}\Phi_{a}(\xi)+K^{aik_{1}k_{2}}\delta\partial_{k_{1}}%
\partial_{k_{2}}\Phi_{a}(\xi)  \notag \\
& +\cdots+K^{aik_{1}\cdots k_{N-1}}\delta\partial_{k_{1}}\cdots
\partial_{k_{N-1}}\Phi_{a}(\xi)]  \notag \\
& +[K_{1}^{ai}\delta\partial_{0}\Phi_{a}(\xi)+K_{1}^{aik_{1}}\delta
\partial_{k_{1}}\partial_{0}\Phi_{a}(\xi)+K_{1}^{aik_{1}k_{2}}\delta
\partial_{k_{1}}\partial_{k_{2}}\partial_{0}\Phi_{a}(\xi)  \notag \\
& +\cdots+K_{1}^{aik_{1}\cdots k_{N-2}}\delta\partial_{k_{1}}\cdots
\partial_{k_{N-2}}\partial_{0}\Phi_{a}(\xi)]  \notag \\
&
+[K_{2}^{ai}\delta\partial_{0}^{2}\Phi_{a}(\xi)+K_{2}^{aik_{1}}\delta%
\partial_{k_{1}}\partial_{0}^{2}\Phi_{a}(\xi)+K_{2}^{aik_{1}k_{2}}\delta%
\partial_{k_{1}}\partial_{k_{2}}\partial_{0}^{2}\Phi_{a}(\xi)  \notag \\
& +\cdots+K_{2}^{aik_{1}\cdots k_{N-3}}\delta\partial_{k_{1}}\cdots
\partial_{k_{N-3}}\partial_{0}^{2}\Phi_{a}(\xi)]  \notag \\
&
+\cdots+[K_{N-2}^{ai}\delta\partial_{0}^{N-2}\Phi_{a}(\xi)+K_{N-2}^{aik_{1}}%
\delta\partial_{k_{1}}\partial_{0}^{N-2}\Phi_{a}(\xi)]  \notag \\
& +K_{N-1}^{ai}\delta\partial_{0}^{N-1}\Phi_{a}(\xi)\}.  \tag{13}
\end{align}
This suggests that the Hamiltonian $H$ is a functional of $\{\Phi_{a},\pi
_{1}^{a},\ldots,\pi_{N}^{a}\}$

\begin{equation}
H=H[\xi^{0},\Phi,\pi_{1},\pi_{2},\ldots,\pi_{N}],  \tag{14}
\end{equation}

\begin{equation}
\frac{\delta H}{\delta\Phi_{a}(\xi)}=-\frac{\delta\Lambda}{%
\delta\Phi_{a}(\xi)},\frac{\delta H}{\delta\pi_{1}^{a}(\xi)}%
=\partial_{0}\Phi_{a}(\xi ),\frac{\delta H}{\delta\pi_{2}^{a}(\xi)}%
=\partial_{0}^{2}\Phi_{a}(\xi ),\ldots,\frac{\delta H}{\delta\pi_{N}^{a}(\xi)%
}=\partial_{0}^{N}\Phi_{a}(\xi),  \tag{15}
\end{equation}
and

\begin{equation}
\Lambda=\int_{\Delta}d^{3}\xi\lbrack\frac{\delta H}{\delta\pi_{1}^{a}(\xi)}%
\pi_{1}^{a}(\xi)+\frac{\delta H}{\delta\pi_{2}^{a}(\xi)}\pi_{2}^{a}(\xi)+%
\cdots+\frac{\delta H}{\delta\pi_{N}^{a}(\xi)}\pi_{N}^{a}(\xi)]-H  \tag{16}
\end{equation}
From the Euler-Lagrange equation (11), one gets

\begin{equation}
\partial_{0}\pi_{1}^{a}(\xi)-\partial_{0}^{2}\pi_{2}^{a}(\xi)-+\cdots
-(-1)^{N}\partial_{0}^{N}\pi_{N}^{a}(\xi)=-\frac{\delta H}{%
\delta\Phi_{a}(\xi)}.  \tag{17}
\end{equation}
Eqn.(17) and eqn.(18)

\begin{equation}
\partial_{0}\Phi_{a}(\xi)=\frac{\delta H}{\delta\pi_{1}^{a}(\xi)},\partial
_{0}^{2}\Phi_{a}(\xi)=\frac{\delta H}{\delta\pi_{2}^{a}(\xi)},\ldots
,\partial_{0}^{N}\Phi_{a}(\xi)=\frac{\delta H}{\delta\pi_{N}^{a}(\xi)}, 
\tag{18}
\end{equation}
constitute\textbf{\ the canonical equations}. Note that when $N=1$ ( all
pre-G.R. field theories belong to this case ), canonical equations (17),
(18) \ read

\begin{align}
\partial_{0}\pi_{1}^{a}(\xi) & =-\frac{\delta H}{\delta\Phi_{a}(\xi )}, 
\notag \\
\partial_{0}\Phi_{a}(\xi) & =\frac{\delta H}{\delta\pi_{1}^{a}(\xi)}. 
\tag{19}
\end{align}
And one has

\begin{equation}
\frac{d}{d\xi^{0}}H=\partial_{0}H.  \tag{20}
\end{equation}
When $N=2$ ( G.R. is this case ), the canonical equations\ read

\begin{align}
\partial_{0}\pi_{1}^{a}(\xi)-\partial_{0}^{2}\pi_{2}^{a}(\xi) & =-\frac{%
\delta H}{\delta\Phi_{a}(\xi)},  \notag \\
\partial_{0}\Phi_{a}(\xi) & =\frac{\delta H}{\delta\pi_{1}^{a}(\xi )}, 
\notag \\
\partial_{0}^{2}\Phi_{a}(\xi) & =\frac{\delta H}{\delta\pi_{2}^{a}(\xi)}. 
\tag{21}
\end{align}
\bigskip And one has

\begin{equation}
\frac{d}{d\xi^{0}}\{H-\int d^{3}\xi\lbrack\partial_{0}\Phi_{a}(\xi
)\partial_{0}\pi_{2}^{a}(\xi)]\}=\partial_{0}H.  \tag{22}
\end{equation}

\section{Noether's theorem}

\subsection{Proof of Noether's theorem for Lagrangians containing up to $N$%
-th derivatives of field}

Now we have to deal with two kinds of derivatives of $L(x,\Phi (x),\partial
\Phi (x),\partial ^{2}\Phi (x),\ldots ,\partial ^{N}\Phi (x))$ with respect
to coordinates, $\partial _{\sigma }L=\partial L/\partial x^{\sigma }$ and $%
\eth _{\sigma }L=\eth L/\eth x^{\sigma }$, relating to each other through
the following equation,%
\begin{align}
\frac{\partial L}{\partial x^{\sigma }}& =\frac{\eth L}{\eth x^{\sigma }}+%
\frac{\partial L}{\partial \Phi _{a}(x)}\partial _{\sigma }\Phi _{a}(x)+%
\frac{\partial L}{\partial \partial _{\lambda _{1}}\Phi _{a}(x)}\partial
_{\sigma }\partial _{\lambda _{1}}\Phi _{a}(x)  \notag \\
& +\cdots +\frac{\partial L}{\partial \partial _{\lambda _{1}}\cdots
\partial _{\lambda _{N}}\Phi _{a}(x)}\partial _{\sigma }\partial _{\lambda
_{1}}\cdots \partial _{\lambda _{N}}\Phi _{a}(x)  \tag{23}
\end{align}

\begin{theorem}
If the action of classical fields over every spacetime region $\Omega $
remains unchanged under the following $r-$parameter family of infinitesimal
transformation of coordinates and fields%
\begin{align}
x^{\lambda }& \longmapsto \widetilde{x}^{\lambda }=x^{\lambda }+\delta
x^{\lambda },  \notag \\
\Phi _{a}(x)& \longmapsto \widetilde{\Phi }_{a}(x)=\Phi _{a}(x)+\delta \Phi
_{a}(x),  \tag{24}
\end{align}%
then there exist $r$ conserved quantities.
\end{theorem}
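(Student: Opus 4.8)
**

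The plan is to follow the classical Noether argument, but adapted to a Lagrangian with derivatives up to order $N$, and to exploit the two machinery pieces already assembled in the excerpt: the identity (23) relating $\partial_\sigma L$ and $\eth_\sigma L$, and the boundary term $B^{a\lambda\ldots}$ coefficients from (2)--(3). First I would introduce the $r$ infinitesimal generators explicitly, writing $\delta x^\lambda = \sum_{s=1}^{r}\epsilon_s X^\lambda_{(s)}(x)$ and $\delta\Phi_a(x)=\sum_{s=1}^{r}\epsilon_s \Psi_{a(s)}(x,\Phi,\partial\Phi,\ldots)$, and pass to the "vertical"/local variation $\bar\delta\Phi_a(x) := \widetilde\Phi_a(x) - \Phi_a(x) - \text{(transport term)}$, i.e. the variation at a fixed coordinate value; the standard relation is $\delta\Phi_a = \bar\delta\Phi_a + \partial_\sigma\Phi_a\,\delta x^\sigma$. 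I would then spell out that invariance of the action over every $\Omega$ means
\[
\delta A = \int_\Omega d^4x\,\Big[ \partial_\sigma\big(L\,\delta x^\sigma\big) + \delta_{0}L \Big] = 0,
\]
where $\delta_0 L$ is the change of $L$ at fixed $x$ induced by $\bar\delta\Phi_a$ and its derivatives, and the first term is the Jacobian/transport contribution of the coordinate change (this is where (23) is used to convert total coordinate derivatives of $L$ into the explicit $\eth_\sigma L$ plus field-dependent pieces).

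Next I would compute $\delta_0 L$ by the same integration-by-parts that produced (2): it equals $\dfrac{\delta A}{\delta\Phi_a(x)}\,\bar\delta\Phi_a(x)$ plus a total divergence $\partial_\lambda\big[B^{a\lambda}\bar\delta\Phi_a + B^{a\lambda\nu_1}\bar\delta\partial_{\nu_1}\Phi_a + \cdots + B^{a\lambda\nu_1\cdots\nu_{N-1}}\bar\delta\partial_{\nu_1}\cdots\partial_{\nu_{N-1}}\Phi_a\big]$, with exactly the $B$-coefficients of (3). Substituting into the invariance condition, using the Euler--Lagrange equation (5) to kill the $\delta A/\delta\Phi_a$ term on-shell, and collecting everything under a single divergence, I get $\partial_\lambda J^\lambda = 0$ on solutions, where
\[
J^\lambda = L\,\delta x^\lambda + B^{a\lambda}\bar\delta\Phi_a + B^{a\lambda\nu_1}\bar\delta\partial_{\nu_1}\Phi_a + \cdots + B^{a\lambda\nu_1\cdots\nu_{N-1}}\bar\delta\partial_{\nu_1}\cdots\partial_{\nu_{N-1}}\Phi_a .
\]
Because $\delta x^\lambda$ and $\bar\delta\Phi_a$ are each linear combinations of the $r$ parameters $\epsilon_s$, the current splits as $J^\lambda=\sum_{s=1}^r \epsilon_s J^\lambda_{(s)}$ and the single conservation law becomes $r$ independent ones $\partial_\lambda J^\lambda_{(s)}=0$; integrating each over a spatial slice $\Sigma_{\xi^0}$ and using the boundary conditions gives $r$ conserved charges $Q_{(s)}=\int_{\Sigma}d\sigma_\lambda\, J^\lambda_{(s)}$, which is the claimed conclusion. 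I would also note the freedom $J^\lambda\to J^\lambda + \partial_\mu(\text{antisymmetric})$ and the possibility of an "off-shell improvement" term if invariance holds only up to a divergence, though the statement as given assumes strict invariance.

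The main obstacle is purely bookkeeping: in the $N$-th order case, $\bar\delta\partial_{\nu_1}\cdots\partial_{\nu_k}\Phi_a = \partial_{\nu_1}\cdots\partial_{\nu_k}\bar\delta\Phi_a$ must be carefully distinguished from $\delta(\partial_{\nu_1}\cdots\partial_{\nu_k}\Phi_a)$ (the two differ by transport terms involving derivatives of $\delta x^\sigma$), and the repeated integration by parts that regroups $\delta_0 L$ into $(\delta A/\delta\Phi_a)\bar\delta\Phi_a$ plus a divergence has to reproduce precisely the $B$-tensors of (3), including all the binomial-type combinatorial factors implicit there. The cleanest way to control this is to do the integration by parts once, abstractly, exactly as was done to derive (2), and simply cite (2)--(3) with $\delta\Phi_a$ replaced by $\bar\delta\Phi_a$; then the only genuinely new computation is expanding $\partial_\sigma(L\,\delta x^\sigma)$ and using (23) to show the explicit-coordinate pieces cancel against $\eth_\sigma L$ when the action is invariant, leaving a clean total divergence. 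Everything else is assembling the pieces and reading off the $r$ currents.
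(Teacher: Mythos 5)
Your proposal is correct and follows essentially the same route as the paper: both use the relation (23), the integration-by-parts machinery of (2)--(3), the Euler--Lagrange equation on shell, and the arbitrariness of $\Omega$ to obtain $\partial_\lambda J^\lambda=0$ with the current built from $L\,\delta x^\lambda$ and the $B$-tensors contracted with the combination $\delta\Phi_a-\partial_\sigma\Phi_a\,\delta x^\sigma$, which is exactly your $\bar\delta\Phi_a$. The only cosmetic difference is that you organize the computation through the vertical variation $\bar\delta$ and cite (2) directly, while the paper expands the variations $\delta[\partial_{\lambda_1}\cdots\partial_{\lambda_k}\Phi_a]$ explicitly (its eqn.\ (25)) and regroups, arriving at the same $r$ conservation laws.
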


\begin{proof}
From eqn. (24) one has

\begin{align*}
\delta d^{4}x& =(\partial _{\sigma }\delta x^{\sigma })d^{4}x, \\
\delta \partial _{\lambda }& =-(\partial _{\lambda }\delta x^{\sigma
})\partial _{\sigma }, \\
\delta \lbrack \partial _{\lambda _{1}}\Phi _{a}(x)]& =(\delta \partial
_{\lambda _{1}})\Phi _{a}(x)+\partial _{\lambda _{1}}\delta \Phi
_{a}(x)=\partial _{\lambda _{1}}\delta \Phi _{a}(x)-\partial _{\sigma }\Phi
_{a}(x)\partial _{\lambda _{1}}\delta x^{\sigma }, \\
\delta \lbrack \partial _{\lambda _{1}}\partial _{\lambda _{2}}\Phi
_{a}(x)]& =\partial _{\lambda _{1}}\partial _{\lambda _{2}}\delta \Phi
_{a}(x)-\partial _{\lambda _{2}}\partial _{\sigma }\Phi _{a}(x)\partial
_{\lambda _{1}}\delta x^{\sigma }-\partial _{\lambda _{1}}\partial _{\sigma
}\Phi _{a}(x)\partial _{\lambda _{2}}\delta x^{\sigma } \\
& -\partial _{\sigma }\Phi _{a}(x)\partial _{\lambda _{1}}\partial _{\lambda
_{2}}\delta x^{\sigma }, \\
& \ldots 
\end{align*}%
\newline
\begin{eqnarray}
\delta \lbrack \partial _{\lambda _{1}}\partial _{\lambda _{2}}\cdots
\partial _{\lambda _{N}}\Phi _{a}(x)] &=&\partial _{\lambda _{1}}\partial
_{\lambda _{2}}\cdots \partial _{\lambda _{N}}\delta \Phi
_{a}(x)-\tsum\limits_{1\leq i\leq N}\partial _{\sigma }\partial _{\lambda
_{1}}\cdots \underline{\partial _{\lambda _{i}}}\cdots \partial _{\lambda
_{N}}\Phi _{a}(x)\partial _{\lambda _{i}}\delta x^{\sigma }  \notag \\
&&-\tsum\limits_{1\leq i<,j\leq N}\partial _{\sigma }\partial _{\lambda
_{1}}\cdots \underline{\partial _{\lambda _{i}}}\cdots \underline{\partial
_{\lambda _{j}}}\cdots \partial _{\lambda _{N}}\Phi _{a}(x)\partial
_{\lambda _{i}}\partial _{\lambda _{j}}\delta x^{\sigma }  \notag \\
&&-\tsum\limits_{1\leq i<,j\leq k\leq N}\partial _{\sigma }\partial
_{\lambda _{1}}\cdots \underline{\partial _{\lambda _{i}}}\cdots \underline{%
\partial _{\lambda _{j}}}\cdots \underline{\partial _{\lambda _{k}}}\cdots
\partial _{\lambda _{N}}\Phi _{a}(x)\partial _{\lambda _{i}}\partial
_{\lambda _{j}}\partial _{\lambda _{k}}\delta x^{\sigma }-\cdots  \\
&&-\tsum\limits_{k}\partial _{\sigma }\partial _{\lambda _{k}}\Phi
_{a}(x)\partial _{\lambda _{1}}\cdots \underline{\partial _{\lambda _{k}}}%
\cdots \partial _{\lambda _{N}}\delta x^{\sigma }  \notag \\
&&-\partial _{\sigma }\Phi _{a}(x)\partial _{\lambda _{1}}\cdots \partial
_{\lambda _{N}}\delta x^{\sigma }  \TCItag{25}
\end{eqnarray}%
Substutute eqns.(24) and (25) into the following equation

\begin{align*}
\delta A& =\int_{\Omega }(\delta d^{4}x)L+\int_{\Omega }d^{4}x[\frac{\eth L}{%
\eth x^{\sigma }}\delta x^{\sigma }+\frac{\partial L}{\partial \Phi _{a}(x)}%
\delta \Phi _{a}(x)+\frac{\partial L}{\partial \partial _{\lambda _{1}}\Phi
_{a}(x)}\delta \partial _{\lambda _{1}}\Phi _{a}(x) \\
& +\cdots +\frac{\partial L}{\partial \partial _{\lambda _{1}}\cdots
\partial _{\lambda _{N}}\Phi _{a}(x)}\delta \partial _{\lambda _{1}}\cdots
\partial _{\lambda _{N}}\Phi _{a}(x)]
\end{align*}%
one gets%
\begin{align}
\delta A& =\int_{\Omega }d^{4}x[\frac{\partial L}{\partial \Phi _{a}(x)}%
-\partial _{\lambda _{1}}\frac{\partial L}{\partial \partial _{\lambda
_{1}}\Phi _{a}(x)}+-\cdots +(-1)^{N}\partial _{\lambda _{1}}\cdots \partial
_{\lambda _{N}}\frac{\partial L}{\partial \partial _{\lambda _{1}}\cdots
\partial _{\lambda _{N}}\Phi _{a}(x)}]\times  \notag \\
& (\delta \Phi _{a}(x)-\partial _{\sigma }\Phi _{a}(x)\delta x^{\sigma
})+\int_{\Omega }d^{4}x\partial _{\lambda }[L\delta _{\sigma }^{\lambda
}\delta x^{\sigma }+B^{a\lambda }(\delta \Phi _{a}(x)-\partial _{\sigma
}\Phi _{a}(x)\delta x^{\sigma })  \notag \\
& +B^{a\lambda \nu _{1}}\partial _{\nu _{1}}(\delta \Phi _{a}(x)-\partial
_{\sigma }\Phi _{a}(x)\delta x^{\sigma })+B^{a\lambda \nu _{1}\nu
_{2}}\partial _{\nu _{1}}\partial _{\nu _{2}}(\delta \Phi _{a}(x)-\partial
_{\sigma }\Phi _{a}(x)\delta x^{\sigma })  \notag \\
& +\cdots +B^{a\lambda \nu _{1}\cdots \nu _{N-1}}\partial _{\nu _{1}}\cdots
\partial _{\nu _{N-1}}(\delta \Phi _{a}(x)-\partial _{\sigma }\Phi
_{a}(x)\delta x^{\sigma })]  \tag{26}
\end{align}%
The first integral at rhs vanishes for real movement, hence the second
integral does too. One gets the following equation due to the arbitrariness
of $\Omega $.%
\begin{align}
0& =\partial _{\lambda }[L\delta _{\sigma }^{\lambda }\delta x^{\sigma
}+B^{a\lambda }(\delta \Phi _{a}(x)-\partial _{\sigma }\Phi _{a}(x)\delta
x^{\sigma })+B^{a\lambda \nu _{1}}\partial _{\nu _{1}}(\delta \Phi
_{a}(x)-\partial _{\sigma }\Phi _{a}(x)\delta x^{\sigma })  \notag \\
& +B^{a\lambda \nu _{1}\nu _{2}}\partial _{\nu _{1}}\partial _{\nu
_{2}}(\delta \Phi _{a}(x)-\partial _{\sigma }\Phi _{a}(x)\delta x^{\sigma
})+\cdots  \notag \\
& +B^{a\lambda \nu _{1}\cdots \nu _{N-1}}\partial _{\nu _{1}}\cdots \partial
_{\nu _{N-1}}(\delta \Phi _{a}(x)-\partial _{\sigma }\Phi _{a}(x)\delta
x^{\sigma })]  \tag{27}
\end{align}%
Noting that both $\delta x^{\sigma }$ and $\delta \Phi _{a}(x)$ depend on $r$
real parameters, one can consider eqn.(27) as $r$ conservation laws.
\end{proof}

\subsection{"Conservation law due to "coordinate shift" invariance}

In this subsections, we restrict our discussion to Lagrangians which do not
manifestly contain coordinates and is invariant under "coordinate shift". In
this case, the action (1) remains unchanged under the following "coordinate
shift".

\begin{equation}
\delta x^{\sigma}=\varepsilon^{\sigma},\text{ }\delta\Phi_{a}(x)=0.  \tag{28}
\end{equation}
In this case, eqn.(27) reads

\begin{equation}
\partial_{\lambda}\tau_{\sigma}^{\lambda}=0,  \tag{29}
\end{equation}
where

\begin{align}
\tau_{\sigma}^{\lambda} &
=B^{a\lambda}\partial_{\sigma}\Phi_{a}(x)+B^{a\lambda\nu_{1}}\partial_{%
\nu_{1}}\partial_{\sigma}\Phi_{a}(x)+B^{a\lambda\nu_{1}\nu_{2}}\partial_{%
\nu_{1}}\partial_{\nu_{2}}\partial_{\sigma}\Phi_{a}(x)  \notag \\
& +\cdots+B^{a\lambda\nu_{1}\cdots\nu_{N-1}}\partial_{\nu_{1}}\cdots
\partial_{\nu_{N-1}}\partial_{\sigma}\Phi_{a}(x)-L\delta_{\sigma}^{\lambda} 
\tag{30}
\end{align}
is usually called energy-momentum tensor. Notice that "coordinate shift"
eqn.(28) is not an invariant concept under general coordinate
transformation. This is easily seen from the active viewpoint of
transformation.\ This explains why $\tau_{\sigma}^{\lambda}$ in not a tensor
under general coordinate transformation. We will get back to this problem
later.

\section{Hamilton's principal functional and Hamilton-Jacobi's equation}

Let us consider the difference between actions over spacetime region $\Omega$
of two real movements close to each other. Using eqns.(2) and (5), one gets,
for real movements

\begin{align}
\delta A[\Phi] & =\int_{\partial\Omega}ds_{\lambda}[B^{a\lambda}\delta
\Phi_{a}(x)+B^{a\lambda\nu_{1}}\delta\partial_{\nu_{1}}\Phi_{a}(x)+B^{a%
\lambda \nu_{1}\nu_{2}}\delta\partial_{\nu_{1}}\partial_{\nu_{2}}\Phi_{a}(x)
\notag \\
&
+\cdots+B^{a\lambda\nu_{1}\cdots\nu_{N-1}}\delta\partial_{\nu_{1}}\cdots%
\partial_{\nu N-1}\Phi_{a}(x)]  \tag{31}
\end{align}
From eqn.(31), one sees that the action over a spacetime region $\Omega$ of
a real movement is determined by the closed hyper-surface $\partial\Omega$,
and $\Phi|_{\partial\Omega},$ $\partial\Phi|_{\partial\Omega},\ldots,$ $%
\partial^{N-1}\Phi|_{\partial\Omega}$. It will be called the generalized
Hamilton's principal functional and denoted by

\begin{equation}
S=S[\partial\Omega,\Phi|_{\partial\Omega},\partial\Phi|_{\partial\Omega
},\ldots,\partial^{N-1}\Phi|_{\partial\Omega}]  \tag{32}
\end{equation}
Re-write eqn.(31) as

\begin{align}
\delta S & =\int_{\partial\Omega}ds_{\lambda}[B^{a\lambda}\delta\Phi
_{a}(x)+B^{a\lambda\nu_{1}}\delta\partial_{\nu_{1}}\Phi_{a}(x)+B^{a\lambda
\nu_{1}\nu_{2}}\delta\partial_{\nu_{1}}\partial_{\nu_{2}}\Phi_{a}(x)  \notag
\\
&
+\cdots+B^{a\lambda\nu_{1}\cdots\nu_{N-1}}\delta\partial_{\nu_{1}}\cdots%
\partial_{\nu N-1}\Phi_{a}(x)]  \tag{33}
\end{align}

Note that when $\Phi_{a}|_{\partial\Omega}$ is given, only one of the four
derivatives $\partial_{\lambda}\Phi_{a}|_{\partial\Omega}$ $%
(\lambda=0,1,2,3) $ is independent; when $\partial_{\lambda}\Phi_{a}|_{%
\partial\Omega}$ is given, only one of the four derivatives $%
\partial_{\mu}\partial_{\lambda}\Phi _{a}|_{\partial\Omega}$ $(\mu=0,1,2,3)$
is independent; and so on. Thus for a given suffix $a$, only $N$ items from $%
\Phi_{a}|_{\partial\Omega,}\partial_{\lambda_{1}}\Phi_{a}|_{\partial\Omega},%
\ldots,\partial_{\lambda_{1}}\cdots\partial_{\lambda_{N-1}}\Phi_{a}|_{%
\partial\Omega}$ $(\lambda _{j}=0,1,2,3)$ are independent.

In order to formulate the generalized Hamilton-Jacobi's equation, one needs
a new type of functional derivative.

\begin{definition}
Let $\Sigma$ be a hypersurface in spacetime $M$, $\Psi$ a function defined
on $M$, and $F=F[\Sigma,\Psi|_{\Sigma}]$ a functional of $\Sigma$ and$\ \Psi
|_{\Sigma}$. The functional derivatives are defined as follows. If the
variation of $F$ can be written as%
\begin{equation}
\delta F[\Sigma,\Psi|_{\Sigma}]=\int_{\Sigma}ds_{\lambda}\left\{
Y[\Sigma,\Psi|_{\Sigma},x)_{\mu}^{\lambda}\delta\Sigma^{\mu}(x)+Z[\Sigma
,\Psi|_{\Sigma},x)^{\lambda}\delta\Psi(x)\right\}  \tag{34}
\end{equation}
then $Y[\Sigma,\Psi|_{\Sigma},x)_{\mu}^{\lambda}$ and $Z[\Sigma,\Psi|_{%
\Sigma },x)^{\lambda}$ are called the functional derivative of $F$ with
respect to $\Sigma^{\mu}(x)$ and $\Psi(x)$, and denoted by 
\begin{equation}
Y[\Sigma,\Psi|_{\Sigma},x)_{\mu}^{\lambda}=\left( \frac{\delta F}{%
\delta\Sigma^{\mu}(x)}\right) ^{\lambda},\text{ }Z[\Sigma,\Psi|_{\Sigma
},x)^{\lambda}=\left( \frac{\delta F}{\delta\Psi(x)}\right) ^{\lambda} 
\tag{35}
\end{equation}
respectively.
\end{definition}

Hence we have%
\begin{equation}
\delta F[\Sigma,\Psi|_{\Sigma}]=\int_{\Sigma}ds_{\lambda}\left[ \left( \frac{%
\delta F}{\delta\Sigma^{\mu}(x)}\right) ^{\lambda}\delta\Sigma^{\mu
}(x)+\left( \frac{\delta F}{\delta\Psi(x)}\right) ^{\lambda}\delta \Psi(x)%
\right]  \tag{36}
\end{equation}

The hypersurface $\Sigma$ is given by the parameter equation

\begin{equation}
x^{\mu}=\Sigma^{\mu}(\theta^{1},\theta^{2},\theta^{3})  \tag{37}
\end{equation}
The $\delta\Sigma^{\mu}(x)$ in eqn.(33) is

\begin{equation}
\delta\Sigma^{\mu}(x)=\widetilde{\Sigma}^{\mu}(\theta^{1},\theta^{2},%
\theta^{3})-\Sigma^{\mu}(\theta^{1},\theta^{2},\theta^{3}).  \tag{38}
\end{equation}

Now, from eqn.(33) we have

\begin{equation}
\left( \frac{\delta S}{\delta\Phi_{a}(x)}\right) ^{\lambda}=B^{a\lambda
},\left( \frac{\delta S}{\delta\partial_{\nu_{1}}\Phi_{a}(x)}\right)
^{\lambda}=B^{a\lambda\nu_{1}},\ldots,\left( \frac{\delta S}{\delta
\partial_{\nu_{1}}\cdots\partial_{\nu_{N-1}}\Phi_{a}(x)}\right) ^{\lambda
}=B^{a\lambda\nu_{1}\cdots\nu_{N-1}}.  \tag{39}
\end{equation}
Follow the evolution of one real movement and observe the change of its
action.

\begin{align}
\delta S & =\int_{\partial\Omega}ds_{\lambda}L\delta_{\sigma}^{\lambda
}\delta\Sigma^{\sigma}(x)  \notag \\
& =\int_{\partial\Omega}ds_{\lambda}[\left( \frac{\delta S}{\delta
\Sigma^{\sigma}(x)}\right) ^{\lambda}\delta\Sigma^{\sigma}(x)+\left( \frac{%
\delta S}{\delta\Phi_{a}(x)}\right) ^{\lambda}\partial_{\sigma}\Phi
_{a}(x)\delta\Sigma^{\sigma}(x)  \notag \\
& +\left( \frac{\delta S}{\delta\partial_{\nu_{1}}\Phi_{a}(x)}\right)
^{\lambda}\partial_{\sigma}\partial_{\nu_{1}}\Phi_{a}(x)\delta\Sigma^{\sigma
}(x)+\cdots  \notag \\
& +\left( \frac{\delta S}{\delta\partial_{\nu_{1}}\cdots\partial_{\nu_{N-1}}%
\Phi_{a}(x)}\right)
^{\lambda}\partial_{\sigma}\partial_{\nu_{1}}\cdots\partial_{\nu_{N-1}}%
\Phi_{a}(x)\delta\Sigma^{\sigma}(x)]  \tag{40}
\end{align}
From eqns.(30), (39) and (40), we get the generalized Hamilton-Jacobi's
equation.

\begin{equation}
\left( \frac{\delta S}{\delta\Sigma^{\sigma}(x)}\right)
^{\lambda}+\tau_{\sigma}^{\lambda}=0.\text{ }  \tag{41}
\end{equation}

\begin{remark}
So far we have presented a general variational priciple for classical
fields. The only postulate made in this formalism is the least action
principle. This formalism applies to all the classical fields $\{\Phi
_{a}(x)\}$ with a Lagrangian $L(x,\Phi (x),\partial \Phi (x),\partial
^{2}\Phi (x),\ldots ,\partial ^{N}\Phi (x))$, say, Newtonian fluid
mechanics, Maxwell's electromagnetic field, general relativity, etc. The
specific symmtries and covariance of a classical field are the heritage from
the Lagrangian, not from this general formalism. This formalism yields
manifestly Galilean (Lorentzian, general) covariant field theory when the
inputted Lagrangian is Galilean (Lorentzian, general) covariant. It is worth
noting that all the results obtained above, are in great harmony with each
other. We will apply this general variational priciple to general
relativity, especially apply the generalized Noether's theorem to the long
standing problem, conservation and non-conservation in curved spacetime in
part II and part III.
\end{remark}

\begin{acknowledgement}
I am grateful to Prof. Zhanyue Zhao, Prof. Shihao Chen and Prof. Xiaoning Wu
for helpful discussions.
\end{acknowledgement}

\end{document}